\DeclareMathAlphabet{\mathbmit}{OML}{cmm}{b}{it}
\renewcommand{\vec}[1]{\mathbmit{#1}}
\let\matr\vec
\begin{document}

\nocite{*}

\title{On a linearization technique for solving the quadratic set covering problem and variations}
\titlerunning{On a linearization technique for quadratic set covering problem}

\author{Pooja Pandey $^{\rm a}$\thanks{$^{\rm a}$ Corresponding author. Email: poojap@sfu.ca} and Abraham P. Punnen$^{\rm b}$\thanks{$^{\rm b}$This work was supported by an NSERC discovery grant awarded to Abraham Punnen,  Email: apunnen@sfu.ca}
}

\authorrunning{P. Pandey and A. P. Punnen}

\institute{Pooja Pandey \at
              Department of Mathematics, Simon Fraser University \\
              250 - 13450 – 102nd Avenue, Surrey, BC, V3T 0A3, Canada\\
              \email{poojap@sfu.ca}           
           \and
           Abraham P. Punnen \at
              Department of Mathematics, Simon Fraser University \\
              250 - 13450 – 102nd Avenue, Surrey, BC, V3T 0A3, Canada\\
              \email{apunnen@sfu.ca}
}

\date{Received: date / Accepted: date}

\maketitle

\begin{abstract}
  In this paper we identify various inaccuracies in the paper by R. R. Saxena and S. R. Arora, {\it A Linearization technique for solving the Quadratic Set Covering Problem}, Optimization, 39 (1997) 33-42. In particular, we observe  that their algorithm does  not guarantee optimality, contrary to what is claimed. Experimental analysis has been carried out to assess the value of this algorithm  as a heuristic. The results disclose that for some classes of problems the Saxena-Arora algorithm is effective in achieving good quality solutions while for some  other classes of problems, its performance is  poor. We  also discuss similar   inaccuracies in another related paper.
\end{abstract}

\keywords{  0-1 programming, quadratic programming, algorithms, set covering problem, and heuristic.}

\section{Introduction}

The set covering problem is  well studied  in the Operations Research literature \cite{balas,baz,bec,gar,lem}. Most of the works on the problem reported in the literature have a linear objective function.  Bazaraa and Goode~\cite{baz} introduced the quadratic set covering problem (QSP) and proposed a cutting plane algorithm to solve it.  Adams \cite{adam} and Liberti \cite{leo} proposed linearization techniques for binary quadratic programs.  Since QSP is a binary quadratic programming problem,  these linearization techniques can be used to formulate QSP as a 0-1 integer linear program. QSP is known to be NP-hard  and polynomial time approximation algorithms are also available \cite{bruno} to solve  special classes of this problem.\\

 Saxena and Arora \cite{saxe}  studied QSP  and discussed various structural properties of the problem along with a linearization algorithm which is claimed to produce an optimal solution.  The notion of linearization used in \cite{saxe}  is  different from the concept of "linearization"  used by Adams \cite{adam} and Liberti \cite{leo} and also different from what is  discussed in \cite{{2015ante,2011Kabadi,2013Punnen}}. \\

In this paper, we show that the properties of QSP established in~\cite{saxe} are incorrect and that the algorithm they proposed need not produce an optimal solution. Gupta and Saxena~\cite{gupta} extended the results of~\cite{saxe} to the quadratic set packing and partitioning problems. These extensions also suffer from the same drawbacks as that of~\cite{saxe} and  the algorithm in \cite{gupta} could also  produce a non-optimal solution, contrary to what is claimed.  Since the algorithm of ~\cite{saxe} is not  guaranteed to produce  an optimal solution, it will be interesting to  examine  its value as a heuristic.  Our experimental analysis discloses that the  algorithm of ~\cite{saxe} produces good solutions for some classes of problems while it produces very poor solutions for other classes.\\

\section{The quadratic set covering problem} \label{qsp}

Let $I = \{1,2,\dotsc,m\}$ be a finite set and  $P = \{P_1,P_2,\dots,P_n\}$  be a family of subsets of $I$. The index set for  the elements of $P$ is denoted by $J= \{1,2 ,\dotsc, n\} $. For each element $j \in J$, a cost $c_j$ is prescribed and for each element $(i,j) \in J \times J$, a cost $d_{ij}$ is also prescribed. We refer to $c_j$  the  {\it linear cost} of the set $P_j$ and $\vec{c} = (c_1,\dotsc,c_n)$  the {\it linear cost vector}. Similarly $d_{ij}$ is referred to as the {\it quadratic cost} corresponding to the ordered pair $(P_i,P_j)$  and the matrix $\matr{D} =(d_{ij})_{n\times n}$ is referred  to as the {\it quadratic cost matrix}.\\

A subset $V$ of $J$ is said to be a {\it cover} of $I$, if {$\displaystyle \cup _{j \in V} P_j = I$}.
Then the {\it linear set covering problem} (LSP) is to find a cover $ L=\{\pi(1), \dotsc, \pi(l)\} $ such that  $\sum_{i=1}^{l} c_{\pi(i)}$ is minimized. Likewise the {\it quadratic set covering problem} (QSP) is to select a cover $L= \{\sigma(1), \dotsc, \sigma(l)\}$ such that $\sum_{i=1}^{l} c_{\sigma(i)} + \sum_{i=1}^{l} \sum_{j=1}^{l} d_{\sigma(i)\sigma(j) }$ is minimized.\\

 For each $i\in I$, consider the vector $\vec{a}_i=(a_{i1}, a_{i2}, \hdots, a_{in})$ where
\begin{equation*}
a_{ij} =
\begin{cases}
1 &\text{if $i\in P_j$}\\
0 &\text{otherwise.}
\end{cases}
\end{equation*}
and $\matr{A}= (a_{ij})_{m \times n}$ be an $m \times n$ matrix. Also, consider the decision variables $x_1,x_2,\ldots ,x_n$ where

\begin{equation*}
x_j =
\begin{cases}
1 &\text{if set $P_j$ is selected}\\
0 &\text{otherwise.}
\end{cases}
\end{equation*}

The vector of decision variables is represented by $ \vec{x} =  (x_1, \hdots, x_n )^T$ and $\vec{1}$ is a column vector of size $m$ where all entries are equal to 1. Then the LSP and QSP can be formulated respectively as  0-1 integer programs

\begin{align}
\nonumber \mbox{LSP:\hspace{1cm}Minimize~ } & \vec{c}\vec{x}\\
\label{eeq1} \mbox{Subject to } & \matr{A}\vec{x} \geq \vec{1}\\
\label{eeq2} &\vec{x} \in \{0,1\}^n
\end{align}
and
\begin{align}
\nonumber \mbox{QSP:\hspace{1cm}Minimize~ } & \vec{c}\vec{x} + \vec{x}^T\matr{D}\vec{x}\\
\label{eeq3} \mbox{Subject to } & \matr{A}\vec{x} \geq \vec{1}\\
\label{eeq4} &\vec{x} \in \{0,1\}^n
\end{align}

As indicated in \cite{saxe} the continuous relaxation of  QSP, denoted  by  $QSP^{'}$,  is obtained by replacing the constraint $\vec{x} \in \{0,1\}^n$ by $\vec{x} \ge \vec{0}$, where $\vec{0}$ is the zero vector of size $n$.  The family of feasible solutions of both LSP and QSP is denoted by $\bar{S} = \{ \vec{x}  | A\vec{x}\ge \vec{1}, \vec{x} \in \{0,1\}^n \}$.\\

The following definitions are taken directly from \cite{saxe}.  Any $\vec{x}\in \bar{S}$ is called a \textit{cover solution} and an optimal solution to the underlying problem (LSP or QSP)  is called an \textit{optimal cover solution}. Note that each cover solution corresponds to a cover and vice versa. A \textit{cover} $V$ is said to be redundant if $V - \{j\}$  for $j \in V$ is also a \textit{cover}.
A cover which is not \textit{redundant} is called a \textit{prime cover}.
The incidence vector $\vec{x}$ that corresponds to a \textit{prime cover} is called a \textit{prime cover solution}.\\

Garfinkel and Nemhauser \cite{gar} proved that  if the objective function in LSP has a finite optimal value then there exists a prime cover solution for which this value is attained whenever $\vec{c} \ge \vec{0}$.\\

Saxena and Arora claimed an extension of this result to $QSP^{'}$,  assuming $\vec{c} \ge \vec{0}$ and $\matr{D}$ is symmetric and positive semi-definite. More precisely, they claimed:

\begin{theorem}(Theorem 3 of \cite{saxe})\label{thm1}
 If the objective function in $QSP^{'}$ has finite optimal value then there exists a prime cover solution where this value is attained.
\end{theorem}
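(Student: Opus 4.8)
The plan is to imitate the Garfinkel--Nemhauser argument quoted just above the statement, pushing it through in the presence of the quadratic term. Recall how that argument runs for LSP: starting from an optimal cover solution $\vec{x}$, if the associated cover $V$ is redundant one deletes an index $j\in V$ with $V-\{j\}$ still a cover; since $c_j\ge 0$ the objective $\vec{c}\vec{x}$ does not increase, so the reduced vector is still feasible and still optimal, and iterating this ``peeling'' step yields a prime cover solution of the same optimal value. So the first step here would be to take an optimal solution $\vec{x}^{\ast}$ of $QSP'$ and try to transform it, without increasing $\vec{c}\vec{x}+\vec{x}^{T}\matr{D}\vec{x}$, first into a $0$--$1$ cover solution and then, by repeatedly zeroing redundant coordinates, into a prime cover solution.

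Two points then have to be checked, and I would address them in this order. First, since $QSP'$ only imposes $\vec{x}\ge\vec{0}$, an optimal $\vec{x}^{\ast}$ need not be integral, so one has to argue that the optimal value is nevertheless attained at some $\vec{x}\in\{0,1\}^n$; the only obvious lever is convexity --- with $\matr{D}$ symmetric positive semi-definite and $\vec{c}\ge\vec{0}$ the objective is convex --- combined with some vertex argument for the polyhedron $\{\vec{x}\ge\vec{0}:\matr{A}\vec{x}\ge\vec{1}\}$. Second, if $\vec{x}$ is the incidence vector of a redundant cover and $\vec{x}'$ is obtained from it by setting one coordinate to $0$, one needs $(\vec{x}')^{T}\matr{D}\vec{x}'\le\vec{x}^{T}\matr{D}\vec{x}$ on top of $\vec{c}\vec{x}'\le\vec{c}\vec{x}$, so that the peeling step stays monotone.

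I expect both points to be exactly where the argument collapses, and I would be surprised if the statement is true as phrased. The integrality step is the more serious: taking $\matr{D}$ to be the zero matrix (which is symmetric and positive semi-definite) reduces $QSP'$ to the continuous relaxation of LSP, whose optimal value can be strictly below the value of every cover solution. Already the textbook instance $P_1=\{1,2\}$, $P_2=\{2,3\}$, $P_3=\{1,3\}$ with $c_1=c_2=c_3=1$ shows this: the three covering constraints sum to $2(x_1+x_2+x_3)\ge 3$, so $QSP'$ has optimal value $\tfrac32$, attained at $x_1=x_2=x_3=\tfrac12$, whereas every cover uses at least two of the three sets and hence every prime cover solution has value $2$. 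And even setting integrality aside, positive semi-definiteness of $\matr{D}$ does not force the off-diagonal entries $d_{ij}$ to be nonnegative --- for instance $\bigl(\begin{smallmatrix}1&-1\\-1&1\end{smallmatrix}\bigr)$ is positive semi-definite --- so deleting a column can strictly increase $\vec{x}^{T}\matr{D}\vec{x}$ and the peeling step is unavailable. So I would expect this ``theorem'' to admit a refutation rather than a proof, with any purported argument tacitly assuming more than ``$\vec{c}\ge\vec{0}$ and $\matr{D}$ symmetric positive semi-definite'' --- precisely the kind of gap this paper sets out to document.
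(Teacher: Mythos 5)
You are right to read this ``theorem'' as a claim to be refuted rather than proved: the paper states it only as Saxena--Arora's claim and immediately disproves it by counterexample, so your proposal and the paper agree on the essential point. Your refutation route differs from the paper's, though. You take $\matr{D}=\matr{0}$ (legitimately positive semi-definite) and exploit the integrality gap of the covering LP with $P_1=\{1,2\}$, $P_2=\{2,3\}$, $P_3=\{1,3\}$, where $QSP'$ attains $\tfrac32$ at $(\tfrac12,\tfrac12,\tfrac12)$ while every cover solution costs $2$; this is a valid counterexample to the literal statement about $QSP'$. The paper instead uses $\vec{c}=(0,0,0)$, a $2\times 3$ matrix $\matr{A}$, and the genuinely quadratic $\matr{D}=\bigl(\begin{smallmatrix}2&-1&-1\\-1&1&0\\-1&0&1\end{smallmatrix}\bigr)$, for which the redundant cover $(1,1,1)^T$ is optimal with value $0$ for \emph{both} $QSP$ and $QSP'$, while both prime cover solutions have value $2$. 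That example is strictly stronger: it shows the claim also fails with $QSP'$ replaced by the integer problem $QSP$, which your $\matr{D}=\matr{0}$ instance cannot show (there Garfinkel--Nemhauser applies and the integer problem does have a prime cover optimum). Your diagnosis of the second failure mechanism --- negative off-diagonal entries of a PSD $\matr{D}$ making the column-peeling step non-monotone --- is exactly the mechanism the paper's example exploits, and the paper's Theorem~\ref{thm1.1} confirms your implicit fix: with $\vec{c}$ and $\matr{D}$ entrywise non-negative the peeling argument goes through and a prime cover optimum exists for $QSP$.
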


This result however is not true as indicated  by the following example. Let

\[{\begin{array}{ccc}
\vec{c}=(0,0,0),

&

  \matr{A}=
  \begin{pmatrix}
   1  & 1 & 0 \\
1 & 0  & 1 \\
  \end{pmatrix} \mbox{ and }

  &

  \matr{D}=
  \begin{pmatrix}
   2  & -1 & -1 \\
-1 & 1  & 0 \\
-1 & 0 & 1\\
  \end{pmatrix}

  \end{array} }
\]
Note that  $\matr{D}$ is a symmetric and positive semi-definite matrix. For the QSP and $QSP^{'}$ with $\matr{A},\matr{D}$ and $\vec{c}$ defined as above, it can be verified that  $ \vec{x}^*= (1,1,1)^T$ is an optimal solution  with the objective function value zero for both problems. The optimal \textit{cover} corresponding to $\vec{x}^*$ is $V^*= \{1,2,3\}$ which is a \textit{redundant cover} since $ V^* - \{2\} = \{1,3 \}$ is also a cover. All other cover solutions and their respective objective function values are  listed below:
\begin{equation*}
\begin{aligned}
& \vec{x}^1 = (1,0,1)^T \mbox{ redundant cover solution} \quad &f(\vec{x}^1) = 1\\
&  \vec{x}^2 = (1,1,0)^T  \mbox{ redundant cover solution } \quad &f(\vec{x}^2) = 1 \\
 &   \vec{x}^3 = (0,1,1)^T \mbox{ prime cover solution} \quad &f(\vec{x}^3) = 2 \\
&  \vec{x}^4 = (1,0,0)^T \mbox{ prime cover solution}  \quad & f(\vec{x}^4) = 2\\
\end{aligned}
\end{equation*}

None of these corresponds to an optimal solution for QSP or $QSP^{'}$. In particular, no prime cover solution is optimal for the instances of QSP and $QSP^{'}$ constructed above,  contradicting Theorem \ref{thm1}. This example also shows that Theorem \ref{thm1} cannot be corrected by replacing $QSP^{'}$ with QSP in the theorem. \\


 We now show that a variation of Theorem \ref{thm1} is  true, which relaxes the requirement of $\matr{D}$ being positive semi-definite while sign restrictions are imposed on its elements. This is summarized in our next theorem.

\begin{theorem} \label{thm1.1}
There always exists a prime cover optimal solution for QSP if  $\vec{c}$ and $\matr{D}$ are non-negative.
\end{theorem}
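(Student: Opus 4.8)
The plan is to take any optimal cover solution and strip away redundant sets one at a time, using the nonnegativity of the data to guarantee that no such deletion ever increases the objective. First I would observe that the feasible family $\bar S$ is a finite set, so the implicit hypothesis that QSP has an optimal solution just means $\bar S \neq \emptyset$, and then an optimal cover solution $\vec{x}^*$ exists; let $V^*$ be the corresponding cover. If $V^*$ is prime we are done, so assume it is redundant.

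The crux of the argument is a monotonicity property of the objective. Writing $f(\vec{x}) = \vec{c}\vec{x} + \vec{x}^T\matr{D}\vec{x} = \sum_{j} c_j x_j + \sum_{i,j} d_{ij} x_i x_j$ for the QSP objective, I would show that whenever $\vec{y},\vec{z}\in\{0,1\}^n$ satisfy $\vec{y}\le \vec{z}$ componentwise, then $f(\vec{y})\le f(\vec{z})$: since $\vec{c}\ge \vec{0}$ the linear term is nondecreasing in each coordinate, and since $\matr{D}\ge \vec{0}$ and all variables are $0$--$1$, every product $x_ix_j$ is nondecreasing and carries a nonnegative weight, so the quadratic term is nondecreasing as well. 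This is the only step that uses the hypotheses, and it is where both sign conditions on $\vec{c}$ and $\matr{D}$ are essential.

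Because $V^*$ is redundant, there is some $j\in V^*$ such that $V^*\setminus\{j\}$ is still a cover; its incidence vector $\vec{x}'$ is feasible and satisfies $\vec{x}'\le \vec{x}^*$, so the monotonicity property gives $f(\vec{x}')\le f(\vec{x}^*)$, and optimality of $\vec{x}^*$ forces equality. Thus $\vec{x}'$ is again an optimal cover solution, with a strictly smaller underlying cover. Iterating this deletion yields a strictly decreasing chain of covers, which terminates after finitely many steps at a cover admitting no redundant element, i.e.\ a prime cover, whose incidence vector is the desired prime cover optimal solution. I do not expect any real obstacle here: the whole proof is a short monotonicity-plus-termination argument, and its purpose in the paper is simply to contrast with the (false) Theorem~\ref{thm1}, showing that the correct sufficient condition for a prime optimal cover is nonnegativity of the cost data rather than positive semidefiniteness of $\matr{D}$.
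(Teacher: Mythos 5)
Your proposal is correct and follows essentially the same route as the paper's own proof: remove redundant columns from an optimal cover, and use the non-negativity of $\vec{c}$ and $\matr{D}$ to conclude the objective cannot increase, so the resulting prime cover solution is still optimal. The only difference is cosmetic --- you make the componentwise monotonicity of $f$ explicit and delete redundant elements one at a time with a termination argument, whereas the paper drops them all at once.
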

\begin{proof}  Let $\vec{x}^0 \in \bar{S}$ be an optimal solution of QSP. Then the corresponding optimal objective function value is

\begin{align*}
f(\vec{x}^0 ) = \vec{c} \vec{x}^0 + \vec{x}^{0^T} \matr{D} \vec{x}^0
\end{align*}

Let $J_o$ be the  cover  corresponding to the solution $\vec{x}^0$. If $J_o$ is a prime cover then statement of the theorem is correct. Otherwise we can construct a prime cover, let say $J_1$, from $J_o$ by dropping  the redundant columns. Let  $\vec{x}^1$ be the solution of  QSP with respect to the prime cover $J_1$ and

\begin{align*}
f(\vec{x}^1 ) = \vec{c} \vec{x}^1 + \vec{x}^{1^T} \matr{D} \vec{x}^1.
\end{align*}

Since $\vec{c}$ and $\matr{D}$ are non-negative and $J_1 \subset J_0$,

\begin{align*}
  f(\vec{x}^0 ) & \geq f(\vec{x}^1 )
\end{align*}

Since  $\vec{x}^0$ is an optimal solution to QSP, $f(\vec{x}^0 ) = f(\vec{x}^1 )$ and the and the result follows.

\end{proof}

  The family of feasible solutions for  continuous relaxations of LSP and QSP  is represented by $S = \{ \vec{x}  | A\vec{x}\ge \vec{1}, \vec{x} \ge \vec{0} \}$. The continuous relaxation of LSP is denoted by $LSP^{'}$. \\

Saxena and Arora~\cite{saxe} also proposed an algorithm to solve QSP and claimed that it will produce an optimal solution. Their algorithm is re-stated here.\\

\begin{itemize}\setlength{\itemindent}{1em}
\item[{\bf The}] {\bf Saxena-Arora algorithm for QSP}
\item[{\bf Step 1:}] From the  QSP, construct the corresponding $QSP^{'}$
\item[{\bf Step 2:}] Choose a feasible solution $\vec{ x}^0 \in S $ such that $ \nabla f (\vec{x}^0) \neq \vec{0}$ and
form the corresponding linear programming problem $LSP^{'}$ as

\begin{equation}
\mbox{$LSP^{'}$} \quad  \mbox{ Minimize  }_{\vec{x} \in S} \nabla f (\vec{x}^0)^T\vec{x}. \label{alg1}
\end{equation}

On solving ($LSP^{'}$), let $\vec{x}^1$, be its optimal solution.
Let $S^1 = \{\vec{x}^1\}$.

\item[{\bf Step 3:}] Starting with the point $\vec{x}^1$, form the corresponding $LSP^{'}$, and let
its optimal solution be $\vec{x}^2 \neq \vec{x}^1$. Update $S^1$ i.e. $S^1 = \{\vec{x}^1, \vec{x}^2\}$.

\item[{\bf Step 4:}] Repeat \textbf{Step 3} for the point $\vec{x}^2$, and suppose at the $i^{th}$ stage
$S^1 = \{\vec{x}^1,\vec{x}^2,\hdots,\vec{x}^i\}$. Stop, if at the $(i + 1)^{th}$ stage $\vec{x}^{i+1} \in  S^1$ ,
then $\vec{x}^{i+1}$, is the optimal solution of $QSP^{'}$.

\item[{\bf Step 5:}] If $\vec{x}^{i+1}$ is an optimal solution of the form 0 or 1 then it is a
solution of QSP otherwise, go to \textbf{Step 6}.

\item[{\bf Step 6:}] Apply Gomory cuts to find a solution of the 0 or 1 form and
the corresponding prime cover.
\end{itemize}

The algorithm discussed above suffers from various  drawbacks as listed below.

\begin{itemize}
\item[1.] Even if $\vec{c} \geq \vec{0}$, and  $\matr{D}$ is symmetric and positive semi-definite, the $LSP^{'}$ in \textbf{Step 2} could be unbounded  and hence it need not  have an  optimal solution for all instances.

\item[2.] Suppose that we apply the algorithm only for instances where $LSP^{'}$ in \textbf{Step 2} is bounded in all iterations. Even then, the solution produced in  \textbf{Step 4}  could be non-optimal to $QSP^{'}$.

\item[3.] If the algorithm terminates in \textbf{Step 5} the resulting solution could be non-optimal to QSP.

\item[4.] If the algorithm successfully  moves to \textbf{Step 6}, then also the solution produced could be non-optimal.
\end{itemize}

We now illustrate  each of the drawbacks discussed above  using counterexamples.

\begin{itemize}
\item[1.] Since $\matr{D}$ is a positive semi-definite matrix and $\vec{c} \geq \vec{0}$, the objective function value of $QSP^{'}$ is bounded below by zero. However,  $LSP^{'}$ in \textbf{Step 2} or in \textbf{Step 3} need not be bounded below. Let

\[ {\begin{array}{ccc}
\vec{c}=(0,0 ,0,0),

&

\matr{A}=
  \begin{pmatrix}
 1  & 1 & 0 & 0 \\
1 & 0  & 1 & 0 \\
1 & 0  & 0 & 1 \\
  \end{pmatrix} \mbox{ and }

  &

  \matr{D}=
  \begin{pmatrix}
   10 & -3 & -4 & -4 \\
 -3 &2  & 1 & 1\\
-4 &1 &3& 1\\
-4 &1 & 1& 3\\
  \end{pmatrix}

  \end{array} }
\]
Note that $\matr{D}$ is symmetric and positive semi-definite. Consider the instance of QSP with the above values for $\vec{c}$, $\matr{D}$, and $\matr{A}$. Starting with the feasible solution $\vec{x}^0=( 1,0,0,0)^T \in S $ of $QSP^{'}$, we get the $LSP^{'}$ in \textbf{Step 2} as

\begin{align*}
\mbox{Minimize~~~ } &\nabla f (\vec{x}^0)^T\vec{x} = 20 x_1 -6 x_2 -8 x_3   -8 x_4 \\
\mbox{Subject to: } & x_1+x_2 \ge 1\\
    & x_1+x_3 \ge 1\\
    & x_1+x_4 \ge 1\\
    & x_j \geq 0 \mbox{ for } j = 1, 2,3,4.
\end{align*}

This problem is unbounded. Thus the algorithm can not be applied in this case. The immediate conclusion  is that the Saxena-Arora ~\cite{saxe} algorithm is potentially   applicable only to  those QSP instances where the resulting   $LSP^{'}$ is bounded in every step.\\

\item[2.] The algorithm can  fail in \textbf{Step 4}.  The Saxena-Arora algorithm claims to produce an optimal solution of $QSP^{'}$ in   \textbf{Step 4}  but  this may not be true always. Consider the data

\[ {\begin{array}{ccc}
\vec{c}=(0,0 ,0,0),

&

\matr{A}=
  \begin{pmatrix}
 1  & 1 & 0 & 0 \\
1 & 0  & 1 & 0 \\
1 & 0  & 0 & 1 \\
  \end{pmatrix} \mbox{ and }

  &

  \matr{D}=
  \begin{pmatrix}
  10 & 2& 2 & 2 \\
 2 &3  & 1 &1\\
2 &1 &3& 1\\
2 &1 & 1& 4\\
  \end{pmatrix}

  \end{array} }
\]

Note that  $\matr{D} $ is symmetric and positive semi-definite. Consider the instance
of QSP with the above values for $\vec{c}$, $\matr{D}$, and $\matr{A}$, we get the QSP as

\begin{align*}
\mbox{Min } & f( \vec{x})\! = \! 10x_1^2+ \! 3x_2^2+ 3x_3^2 \!+\! 4x_4^2 \! + \! 4 x_1x_2 \! + \! 4 x_1x_3 \! + \! 4x_1x_4+ 2x_2x_3 \!+\! 2x_2x_4 \!+ \!2x_3x_4\\
\mbox{st: } & x_1+x_2 \ge 1\\
    & x_1+x_3 \ge 1\\
    & x_1+x_4 \ge 1\\
    & x_j \in \{0,1\} \mbox{ for } j = 1, 2,3,4.
\end{align*}

and

\begin{align*}
\nabla f (\vec{x}) = &(20x_1 + 4x_2+ 4x_3+ 4x_4, 6x_2 + 4x_1+2x_3+2x_4, \\
& 6x_3+4x_1+ 2x_2 + 2x_4, 8x_4+4x_1+2x_2+2x_3)^T
\end{align*}

\noindent  Select the feasible solution $\vec{x}^0= ( 1, 0, 0, 0 )^T \in S $  of $QSP^{'}$. Construct the $LSP^{'}$ with respect to $\vec{x}^0$,   the objective function of $LSP^{'}$  is

\begin{align*}
\nabla f (\vec{x}^0)^T\vec{x} = &20 x_1 +4 x_2 +4 x_3  + 4 x_4
\end{align*}

Note that $\vec{x}^1 = (0, 1, 1, 1)^T $ is an optimal solution to this $LSP^{'}$. Thus, we set  $S^1=\{\vec{x}^1\}$.  Now, using $\vec{x}^1$ construct the new $LSP^{'}$, and the optimal solution to this $LSP^{'}$ is $\vec{x}^2= ( 1, 0, 0, 0)^T$. Since $\vec{x}^2 \not\in S^1$, construct the new $LSP^{'}$ , the optimal solution to this $LSP^{'}$ is $\vec{x}^3= ( 0, 1, 1, 1)^T$.  Since $\vec{x}^3 \in S^1$,  in \textbf{Step 4} the algorithm concludes that $\vec{x}^3$ is an optimal solution of $QSP^{'}$ with objective function value  16. However, $\vec{x}^* = ( 0.714286, 0.285714, 0.285714, 0.285714 )^T $ which is a better solution for  $QSP^{'}$, contradicting the optimality of $x^3$. Thus the algorithm could fail in \textbf{Step 4}.\\

\item[3.] As per the Saxena-Arora algorithm, \textbf{Step 5} produces an optimal solution to $QSP^{'}$ and if this optimal solution is  binary,  they claim this solution to be an optimal solution of QSP.  We now show that a binary solution produced in \textbf{Step 5} need not  be an optimal solution to QSP. For example. \\

Consider the data

\[ {\begin{array}{ccc}
\vec{c}=(0,0 ,0,0),

&

\matr{A}=
  \begin{pmatrix}
 1  & 1 & 0 & 0 \\
1 & 0  & 1 & 0 \\
1 & 0  & 0 & 1 \\
  \end{pmatrix} \mbox{ and }

  &

  \matr{D}=
  \begin{pmatrix}
   4 & 1& 1 & 1 \\
 1 &2  & 0 &0\\
1 &0 &2& 0\\
1 &0 & 0& 2\\
  \end{pmatrix}

  \end{array} }
\]

Note that $\matr{D}$ is a symmetric, positive semi-definite and  non-negative. As noted  in Theorem \ref{thm1.1}, a prime cover optimal solution exists for this QSP. But still the Saxena-Arora algorithm fails to produce  an optimal solution for QSP. Consider the instance of QSP with the above values for $\vec{c}$, $\matr{D}$, and $\matr{A}$. \\

\noindent  Select the feasible solution $\vec{x}^0=( \frac{3}{4},\frac{1}{4},\frac{1}{4},\frac{1}{4})^T \in S $ which is also an optimal solution of $QSP^{'}$. Construct the $LSP^{'}$ with respect to $\vec{x}^0$ and the objective function is $\nabla f (\vec{x}^0)^T\vec{x} = \frac{15}{2} x_1 + \frac{5}{2} x_2 + \frac{5}{2} x_3  + \frac{5}{2} x_4$.\\

$\vec{x}^1= (0,1,1,1)^T$ is an optimal solution to this $LSP^{'}$. Thus, we set  $S^1=\{\vec{x}^1\}$.  Now, using $\vec{x}^1$, construct the new $LSP^{'}$ with the objective function as $\nabla f (\vec{x}^1)^T\vec{x} = 6 x_1 +4 x_2 +4x_3   +4 x_4$.  An optimal solution to this $LSP^{'}$ is $\vec{x}^2= (1,0,0,0)^T$. Since $\vec{x}^2 \not\in S^1$, we update $S^1=\{\vec{x}^1, \vec{x}^2\}$.
 Starting with $\vec{x}^2$, construct the $LSP^{'}$ with the objective function $\nabla f (\vec{x}^2)^T\vec{x}= 8 x_1 +2 x_2 +2 x_3   +2 x_4$. An optimal solution to this $LSP^{'}$ is  $\vec{x}^3= (0,1,1,1)^T$.
 Since $\vec{x}^3 \in S^1$, the algorithm concludes that $\vec{x}^3$ is an optimal solution of $QSP^{'}$. Since $\vec{x}^3$ contains 0 and 1 entries only, as per the algorithm, it is an optimal  solution to QSP and the corresponding objective function value is $6$. \\

However $\vec{x}^* = (1,0,0,0)^T$ is a better solution to the QSP with objective function value $f(\vec{x}^*) = 4$. Thus, the solution produced by the the Saxena-Arora algorithm for the above instance of QSP is not optimal.\\

In the previous example if $\vec{x}^0=(0,1,1,1)^T$ is selected instead of $( \frac{3}{4},\frac{1}{4},\frac{1}{4},\frac{1}{4})^T$, the algorithm produces $x_1=(1,0,0,0)$, $x_2=(0,1,1,1)$, and $x_3=(1,0,0,0)$, leading to an accurate  optimal solution $x_3=(1,0,0,0)$ to QSP. Note that $x_0= (0,1,1,1)^T$  and $( 1,0,0,0)^T$ are alternate optimal solutions of $LSP^{'}$ with the objective function $\nabla f (\vec{x}^0)^T\vec{x} = \frac{15}{2} x_1 + \frac{5}{2} x_2 + \frac{5}{2} x_3  + \frac{5}{2} x_4$. It is easy to show that trouble of the Saxena-Arora algorithm is not  because of the presence of alternate optimal solutions, leading  to   a  choice in selection. This can be demonstrated with the same example but by selecting a different starting point as given below. \\

\noindent  Select the feasible solution $\vec{x}^0=( 1,\frac{1}{2},0,0)^T \in S $  of $QSP^{'}$. Construct the $LSP^{'}$ with respect to $\vec{x}^0$ and the objective function is $\nabla f (\vec{x}^0)^T\vec{x} = 9 x_1 + 4 x_2 + 2 x_3  + 2 x_4$. $\vec{x}^1= (0,1,1,1)^T$ is the  unique optimal solution to this $LSP^{'}$ (easily verifiable by enumerating the basic feasible solutions). Thus, we set  $S^1=\{\vec{x}^1\}$.  Now, using $\vec{x}^1$, construct the new $LSP^{'}$ with the objective function as $\nabla f (\vec{x}^1)^T\vec{x} = 6 x_1 +4 x_2 +4x_3   +4 x_4$.  The unique optimal solution to this $LSP^{'}$ is $\vec{x}^2= (1,0,0,0)^T$. Since $\vec{x}^2 \not\in S^1$, we update $S^1=\{\vec{x}^1, \vec{x}^2\}$.
 Starting with $\vec{x}^2$, construct the $LSP^{'}$ with the objective function $\nabla f (\vec{x}^2)^T\vec{x}= 8 x_1 +2 x_2 +2 x_3   +2 x_4$. The unique optimal solution to this $LSP^{'}$ is  $\vec{x}^3= (0,1,1,1)^T$.
 Since $\vec{x}^3 \in S^1$, the algorithm concludes that $\vec{x}^3$ is an optimal solution of $QSP^{'}$. Since $\vec{x}^3$ contains 0 and 1 entries only, as per the algorithm, it is an optimal  solution to QSP and the corresponding objective function value is $6$. The solution produced by the the Saxena-Arora algorithm for the above instance of QSP is not optimal.\\

\item[4.] As per the Saxena-Arora algorithm, \textbf{Step 5} produces an optimal solution to $QSP^{'}$ and if this optimal solution is not binary, the algorithm proceeds to  \textbf{Step 6} where  Gomory cuts are applied to find a solution  which they claim to be an optimal solution to QSP.  We now show that \textbf{Step 6} need not  produce an optimal solution to QSP even if the solution produced in  \textbf{Step 4} is optimal for $QSP^{'}$.\\

    In point 3 we gave a counterexample where  the solution is a basic feasible solution (BFS) to $LSP^{'}$ which is binary but not optimal to $QSP^{'}$. Note that $QSP^{'}$ is a continuous quadratic problem and an optimal solution need not correspond to an extreme point. We now illustrate that if the $LSP^{'}$  solver works with any solution (such as interior point methods) and not necessarily with BFS (as in simplex method) it may be possible to get an optimal solution to $QSP^{'}$ in Step 4. For example

 Consider the instance of QSP from the previous case. $\vec{x}^0=( \frac{3}{4},\frac{1}{4},\frac{1}{4},\frac{1}{4})^T \in S $ is an optimal solution of $QSP^{'}$. Construct the $LSP^{'}$ with respect to $\vec{x}^0$ and the resulting  objective function is $\nabla f (\vec{x}^0)^T\vec{x} = \frac{15}{2} x_1 + \frac{5}{2} x_2 + \frac{5}{2} x_3  + \frac{5}{2} x_4$. The algorithm produces $x_1=( \frac{3}{4},\frac{1}{4},\frac{1}{4},\frac{1}{4})^T $, $x_2=( \frac{3}{4},\frac{1}{4},\frac{1}{4},\frac{1}{4})^T $,  leading to an accurate  optimal solution $x_2=(\frac{3}{4},\frac{1}{4},\frac{1}{4},\frac{1}{4})$ to $QSP^{'}$ and the algorithm successfully completes \textbf{Step 4}.  \\

 To apply Gomory cut,first reduce the non-basic feasible solution (non-BFS)  to a basic feasible solution (BFS). From the previous exampl,  the optimal  non-BFS $(\frac{3}{4},\frac{1}{4},\frac{1}{4},\frac{1}{4})$ of $LSP^{'}$ to an optimal  BFS  $\vec{x}^1= (0,1,1,1)^T$  of $LSP^{'}$. Since this is binary , no cutting plane will be added and a Gomory cut phase terminates with the non-optimal solution $\vec{x}^1= (0,1,1,1)^T$ of QSP.\\

 Alternatively if we do not reduce the non-BFS to a BFS  to apply Gomory cuts, but use any Integer programming (IP) solver to compute an optimal integer solution to $LSP^{'}$  we could still  get non-optimal solution. For example: Solving the $LSP^{'}$ at $(\frac{3}{4},\frac{1}{4},\frac{1}{4},\frac{1}{4})$ for 0-1 optimal solution  we could get  $\vec{x}^1= (0,1,1,1)^T$ as the optimal 0-1 solution of $LSP^{'}$. This is not an optimal solution to QSP. (We note that the paper \cite{saxe} does not say anything about the use of general IP solver; but we mentioned it here for the clarity and completeness).\\

\end{itemize}

\section{The quadratic set packing and partitioning problems }

 A subset $H$ of $J$ is said to be a {\it pack } of $I$ if $\bigcup_{j \in H} P_j = I$, and for $j, k \in H$, $j \neq k$, implies $P_j \bigcap P_k = \emptyset.$  Then the {\it linear set packing problem} (LSPP) is to select a {\it pack } $ V=\{\pi(1), \dotsc, \pi(v)\} $ such that  $\sum_{i=1}^{v} c_{\pi(i)}$ is maximized. Likewise, the {\it quadratic set packing problem} (QSPP) is to select a { \it pack } $L= \{\sigma(1), \dotsc,\sigma(l)\}$ such that $\sum_{i=1}^{l} c_{\sigma(i)} + \sum_{i=1}^{l} \sum_{j=1}^{l} d_{\sigma(i)\sigma(j) }$ is maximized.\\

Let $\matr{A}= (a_{ij})_{m \times n}$ be  as defined  in Section \ref{qsp}.  Also, consider the decision variables $x_1,x_2,\ldots ,x_n$ where

\begin{equation*}
x_j =
\begin{cases}
1 &\text{if $j$ is in the pack}\\
0 &\text{otherwise.}
\end{cases}
\end{equation*}

The vector of decision variables is represented as $ \vec{x} =  (x_1, \hdots, x_n )^T$ . Then the LSPP and QSPP can be formulated respectively as  0-1 integer programs

\begin{align}
\nonumber \mbox{LSPP:\hspace{1cm}Maximize~ } & \vec{c}\vec{x}\\
\label{eeq5} \mbox{Subject to } & \matr{A}\vec{x} \leq \vec{1}\\
\label{eeq6} &\vec{x} \in \{0,1\}^n
\end{align}
and
\begin{align}
\nonumber \mbox{QSPP:\hspace{1cm}Maximize~ } & \vec{c}\vec{x} + \vec{x}^T\matr{D}\textbf{x}\\
\label{eeq7} \mbox{Subject to } & \matr{A}\vec{x} \leq \vec{1}\\
\label{eeq8} &\vec{x} \in \{0,1\}^n
\end{align}

The continuous relaxations of LSPP and QSPP, denoted respectively by LSPP(C) and QSPP(C),  are obtained by replacing the constraint $\vec{x} \in \{0,1\}^n$ by $\vec{x} \ge \vec{0}$, respectively in LSPP and QSPP.\\

The family of feasible solutions of both LSPP and QSPP is denoted by $S = \{ \vec{x}  | A\vec{x}\le \vec{1}, \vec{x} \in \{0,1\}^n \}$ and the family of feasible solutions for their continuous relaxations is denoted by $\bar{S} = \{ \vec{x}  | A\vec{x}\le \vec{1}, \vec{x} \ge \vec{0} \}$.  \\

Following are some definitions given in \cite{gupta}.  A solution $\vec{x} \in S$ which satisfies (\ref{eeq7}) and  (\ref{eeq8})  is said to be a \textit{pack solution}. For any \textit{pack} $V$, a column of $\matr{A}$ corresponding to $j^* \in V $ is said to be redundant if $V - \{j^*\}$ is also a \textit{ pack}. If a \textit{pack} corresponds to  one or more redundant columns, it is called a \textit{redundant pack}. A \textit{pack} $ V^*$ is said to be a \textit{prime pack}, if none of the columns corresponding to $j^* \in  V^*$ is \textit{redundant}. A solution corresponding to the \textit{prime pack} is called a \textit{prime packing solution}.\\

From the  definition of a redundant column given above (as in \cite{gupta}),  zero vector is the only prime packing solution for the set packing problem. Thus the results of \cite{gupta} are incorrect with respect to their definitions. We believe the ``-" sign in the above definition of redundant column discussed in \cite{gupta} is a typo and it is probably supposed to be $``\cup"  $ which is consistent with the  definitions given in \cite{gupta1} by the same authors. Hereafter, we use this modified definition. \\

Thus, for any \textit{pack} $V$, a column of $\matr{A}$ corresponding to  $j \in J $ is said to be redundant if $V \cup \{j\}$ is also a pack. If a \textit{pack} contains one or more redundant columns, it is called a \textit{redundant pack}. A \textit{pack} $ V^*$ is said to be a \textit{prime pack}, if none of the columns corresponding to $j \in  J$ is \textit{redundant}. A solution corresponding to the \textit{prime pack} is called a \textit{prime packing solution}.\\

Gupta and Saxena \cite{gupta} assumed $\matr{D}$ to be a negative semi-definite matrix and extended most of the results  for QSP in  \cite{saxe} to QSPP. In particular, they claimed that: \\

\begin{theorem}[Theorem 2 of \cite{gupta}] \label{thm2}
 If the objective function in QSPP has finite value then there exists a prime packing solution where this value is attained.
\end{theorem}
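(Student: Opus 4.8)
The plan is to imitate the exchange argument that proves the linear set packing analogue (Garfinkel--Nemhauser, as invoked for LSP) and Theorem~\ref{thm1.1}. First note that the feasible family $\bar{S}=\{\vec{x}\mid \matr{A}\vec{x}\le\vec{1},\ \vec{x}\in\{0,1\}^n\}$ is finite, so QSPP automatically attains a finite optimum; the real content of the statement is therefore that \emph{some} prime packing solution is optimal. I would start from an optimal pack solution $\vec{x}^0$ with associated pack $V_0$. If $V_0$ is already a prime pack, there is nothing to prove. Otherwise, by the corrected definition of redundancy adopted just above the theorem, there is a column $j\notin V_0$ with $V_0\cup\{j\}$ again a pack; let $\vec{x}^1=\vec{x}^0+\vec{e}_j$, which is again a pack solution. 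Since the support grows by one at each such step, after finitely many additions the process stops at a prime pack $\vec{x}^{\ast}$, and one would try to conclude $f(\vec{x}^{\ast})=f(\vec{x}^0)$.

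The decisive point is the behaviour of $f(\vec{x})=\vec{c}\vec{x}+\vec{x}^T\matr{D}\vec{x}$ under a single such column addition. A direct expansion gives
\[
f(\vec{x}^1)-f(\vec{x}^0)\;=\;c_j+d_{jj}+\sum_{i\in V_0}\bigl(d_{ij}+d_{ji}\bigr).
\]
Because QSPP is a maximization problem, the exchange argument needs this increment to be non-negative at every step: then optimality of $\vec{x}^0$ forces $f(\vec{x}^1)=f(\vec{x}^0)$ and, inductively, $f(\vec{x}^{\ast})=f(\vec{x}^0)$, which would finish the proof. A transparent sufficient condition for the increment to be $\ge 0$ is $\vec{c}\ge\vec{0}$ and $\matr{D}\ge\vec{0}$ entrywise — precisely the hypothesis of Theorem~\ref{thm1.1} — and under that hypothesis the argument above goes through verbatim for QSPP.

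The main obstacle is that this is \emph{not} the hypothesis imposed in \cite{gupta}: they only assume $\matr{D}$ negative semi-definite, which in general has negative diagonal and off-diagonal entries, so the displayed increment can be strictly negative. In that regime enlarging a pack to make it prime can strictly lower the objective, so an optimal pack may genuinely fail to be prime; moreover, since a pack already covers $I$, the only columns that can ever be appended to a pack are those with $P_j=\emptyset$, which makes the notion of prime pack here rather brittle. I therefore expect the statement as written to be false, and the honest "proof'' to turn into a counterexample in the style of the one following Theorem~\ref{thm1}: take $\vec{c}=\vec{0}$ and a symmetric negative semi-definite $\matr{D}$ with negative entries on a small instance in which a non-prime pack strictly dominates every prime pack (for instance via an empty set $P_j$ with $d_{jj}<0$, or by exploiting the $\matr{A}\vec{x}\le\vec{1}$ structure). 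Without an additional sign restriction on the entries of $\matr{D}$ I see no way to rescue the claim; the most one could hope to prove is a packing analogue of Theorem~\ref{thm1.1} requiring $\matr{D}\ge\vec{0}$ rather than negative semi-definiteness.
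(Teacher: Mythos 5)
Your conclusion is the right one: the paper does not prove this statement but refutes it, exhibiting exactly the kind of instance you describe ($\vec{c}=\vec{0}$, symmetric negative semi-definite $\matr{D}$ with negative diagonal, so the zero vector is an optimal pack solution with value $0$ while every prime pack solution has value $-2$), and then salvages only the packing analogue of Theorem~\ref{thm1.1} under the entrywise non-negativity hypothesis, just as you predict. Your increment computation and diagnosis of why negative semi-definiteness cannot support the exchange argument match the paper's reasoning, so this is essentially the same approach.
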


Because of the definition of the  prime pack solution given by Gupta and Saxena \cite{gupta}, a prime pack is always a zero vector hence the theorem is given incorrect.  The theorem is still incorrect even if we use the  modified definition \cite{gupta1} which is indicated above.\\

For example, consider an instance of QSPP with

\[{\begin{array}{ccc}
 \vec{c}=(0,0,0),

&

  A=
  \left[ {\begin{array}{ccc}
   1  & 1 & 0 \\
1 & 0  & 1 \\
  \end{array} } \right],

  &

  D=
  \left[ {\begin{array}{ccc}
   -2  & 1 & 0 \\
1 & -2  & 1 \\
0 & 1 & -2\\
  \end{array} } \right]

  \end{array} }
\]
Note that $\matr{D}$ is symmetric and negative semi-definite.\\

  $\vec{x}^* = ( 0,0,0)^T$ is  an optimal solution for the QSPP with the objective function value zero. We list below all prime pack  solutions with the objective function values.

  \begin{equation*}
\begin{aligned}
& \vec{x}^1 = \{1,0,0\} \mbox{ prime pack solution} \quad &f(\vec{x}^1) = -2\\
 &   \vec{x}^2 = \{0,1,1\} \mbox{ prime pack solution} \quad &f(\vec{x}^2) = -2 \\
\end{aligned}
\end{equation*}

Note that none of these solutions are optimal.\\

We now show that a variation of Theorem \ref{thm2} is  true and  this is summarized in our next theorem.

\begin{theorem} \label{thm2.1}
There always exists a prime pack optimal solution for QSPP if   all elements of $\vec{c} $ and   $\matr{D}$ are non-negative.

\end{theorem}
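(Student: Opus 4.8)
The plan is to mirror the argument used for Theorem \ref{thm1.1}, exploiting monotonicity of the objective under a well-chosen modification of a feasible solution. Observe first that for QSPP the feasible region $S = \{\vec{x} \mid \matr{A}\vec{x} \le \vec{1},\ \vec{x} \in \{0,1\}^n\}$ is closed under ``shrinking'': if $\vec{x} \in S$ and $\vec{y} \le \vec{x}$ componentwise with $\vec{y} \in \{0,1\}^n$, then $\vec{y} \in S$. Under the revised definition of a redundant column, a pack $V$ is prime precisely when no $j \in J \setminus V$ can be added while keeping the pack property; equivalently, the incidence vector $\vec{x}$ of $V$ is prime when it is a \emph{maximal} element of $S$ with respect to the componentwise order.

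First I would take an optimal solution $\vec{x}^0 \in S$ of QSPP, with optimal value $f(\vec{x}^0) = \vec{c}\vec{x}^0 + \vec{x}^{0^T}\matr{D}\vec{x}^0$, and let $V_0$ be the corresponding pack. If $V_0$ is already prime we are done. Otherwise there is some redundant column, i.e.\ some $j \notin V_0$ with $V_0 \cup \{j\}$ still a pack; form $V_1 = V_0 \cup \{j\}$ with incidence vector $\vec{x}^1 \ge \vec{x}^0$. Since $\vec{c} \ge \vec{0}$ and $\matr{D} \ge \vec{0}$ entrywise, and since $\vec{x}^1$ and $\vec{x}^0$ are $0$--$1$ vectors with $\vec{x}^1 \ge \vec{x}^0$, every term in $\vec{c}\vec{x}$ and in $\vec{x}^T\matr{D}\vec{x}$ is nondecreasing, so $f(\vec{x}^1) \ge f(\vec{x}^0)$. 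By optimality of $\vec{x}^0$ for the maximization problem QSPP, $f(\vec{x}^1) = f(\vec{x}^0)$, so $\vec{x}^1$ is also optimal. Repeating this augmentation finitely many times (the support strictly grows each time and $J$ is finite) we reach a pack $V^*$ to which no further column can be added, i.e.\ a prime pack, whose incidence vector $\vec{x}^*$ is still optimal. This establishes the claim.

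The only subtlety, and the step I would be most careful about, is the bookkeeping around the definition: the paper's revised notion of a redundant column (``$j$ is redundant for $V$ if $V \cup \{j\}$ is also a pack'') means a prime pack is one admitting no such $j$, so the augmentation process terminates exactly at a prime pack rather than getting stuck prematurely. I would state explicitly that the process terminates because $|V|$ strictly increases and is bounded by $n$, and that termination with ``no addable column'' is, by definition, primality. The monotonicity inequality $f(\vec{x}^1) \ge f(\vec{x}^0)$ itself is routine given the sign conditions and the fact that we only add (never remove) indices, so I would not belabor it. A short remark could note that, unlike Theorem \ref{thm1.1} where one \emph{drops} redundant columns from a cover, here one \emph{adds} redundant columns to a pack, which is why the nonnegativity hypothesis pushes the objective in the favorable direction for a maximization problem.
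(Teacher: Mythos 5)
Your proposal is correct and follows essentially the same route as the paper's own proof: start from an optimal pack, enlarge it by adding redundant columns until it is prime, use the nonnegativity of $\vec{c}$ and $\matr{D}$ to show the objective cannot decrease, and invoke optimality to conclude equality. The only difference is that you add columns one at a time with an explicit termination argument, whereas the paper adds them all at once; this is a matter of bookkeeping, not of substance.
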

\begin{proof}  Let $\vec{x}^0 \in \bar{S}$ be an optimal solution of QSPP. Then the corresponding optimal objective function value is

\begin{align*}
f(\vec{x}^0 ) = \vec{c} \vec{x}^0 + \vec{x}^{0^T} \matr{D} \vec{x}^0
\end{align*}

Let $J_o$ be the  pack  corresponding to the solution $\vec{x}^0$. If $J_o$ is a prime pack then we are done. Otherwise we can construct a prime pack, let say $J_1$, from $J_o$ by adding  the redundant columns. Let  $\vec{x}^1$ be the solution of  QSP with respect to the prime pack $J_1$ and

\begin{align*}
f(\vec{x}^1 ) = \vec{c} \vec{x}^1 + \vec{x}^{1^T} \matr{D} \vec{x}^1.
\end{align*}

Since $J_1$ obtained by adding redundant columns to $J_0$, therefore, $J_0 \subset J_1$. When all elements of $\vec{c} $ and  $\matr{D}$ are non-negative, or

\begin{align*}
  f(\vec{x}^0 ) & \leq f(\vec{x}^1 )
\end{align*}

Since  $\vec{x}^0$ is an optimal solution to QSPP, $f(\vec{x}^0 ) = f(\vec{x}^1 )$ and the proof follows.

\end{proof}

 Along the same lines as in \cite{saxe}, the authors of \cite{gupta}, provide a solution algorithm for QSPP. Following the insight generated in our counter examples in Section \ref{qsp}, and by  the above observation, it is not difficult to construct counter examples to show that the algorithm of \cite{gupta} need not provide an optimal solution for QSPP. \\

If in equation (\ref{eeq7}) we replace constraints $\matr{A}\vec{x} \leq \vec{1}$ with  $\matr{A}\vec{x} = \vec{1}$, then QSPP changes into quadratic set partitioning problem. Gupta and Saxena   \cite{gupta} proposed a  similar   algorithm for the quadratic set partitioning problem, which has similar issues as in the quadratic set packing problem. We omit the discussion about the quadratic set partitioning problem.

\section{Computational results}

Since the algorithm of \cite{saxe} is not  guaranteed to be optimal, it would be interesting to  examine its value  as a heuristic  to solve QSP.  We have conducted some preliminary experimental analysis  to assess the value of  the Saxena-Arora algorithm as a heuristic using different classes of test problems.\\

The test data was taken from standard benchmark problems for the set covering problem \cite{balas,beas,wool}, and the vertex covering problem \cite{kexu}, with appropriate amendments to incorporate quadratic objective. In this class, we took only small size instances since the quadratic problem is much more difficult and time consuming to solve compared to their linear counterparts. We have also generated some  quadratic vertex cover instances on random graphs taken from \cite{muth}. We divided computational experiments into two different categories, with each $\vec{c} \geq 0$, while in category 1: $\matr{D}$ is a positive semi-definite matrix and in category 2:  $ \matr{D} $ is non-negative  and positive semi-definite.\\

Each element of the linear cost vector $\vec{c}$ is a random integer from the interval [3,5]. Since the quadratic cost matrix $\matr{D} $ is positive semi-definite,   there exists a square matrix $\matr{B}$ such  that $\matr{D} = \matr{B}\matr{B}^T$. This $\matr{D} $ is generated by a random square matrix $\matr{B}$ where each element of $\matr{B}$ is a random integer between  -10 and 10. When  $ \matr{D} $ is non-negative  and positive semi-definite, each element of $\matr{B}$ is selected as a random integer between 0 and 20.\\

The Saxena-Arora algorithm was coded in C++ and tested on a PC with windows 7 operating system, Intel 3770 i7 3.40 GHz processor and with 16 GB of RAM.  We also used CPLEX  0-1 integer quadratic solver (version 12.5) to compute exact (heuristic) solutions. For each instance that we tested, we set CPLEX time limit to be the same as the time taken by Saxena-Arora algorithm and also run CPLEX by doubling this running time. These two implementations provide heuristic solutions and were compared with the solution produced by the Saxena-Arora algorithm.\\

In the tables, $t_1$ is the cpu time taken by Saxena-Arora algorithm. The column ``CPLEX Sol{ ($t_1$})" represents  the heuristic solution obtained by CPLEX by fixing its running time to $t_1$ and the column ``CPLEX Sol{ ($2t_1$})" represents CPLEX run with $2t_1$ upper bound on the execution time. The column ``negative entries in Q"  provides percentage of negative entries in the matrix $\matr{D} $. The column ``Sol" refers  the objective function values. CPLEX quadratic solver takes more time to solve QSP to  optimality when $\matr{D}$ is positive semi-definite compare to the instances when $\matr{D}$ is non-negative and positive semi-definite.  Therefore,  Table \ref{table:1} reports  lower bound value and  Table \ref{table:2}  reports optimal solution value.

\begin{table}[h!]
\caption{Benchmark instances, $\matr{D}$ is positive semi-definite}
\resizebox{\columnwidth}{!}{%
\begin{tabular}{@{}l cccccccc @{}} \toprule[1.5pt]
\multirow{3}[3]{*}{problem} & \multicolumn{2}{@{}c@{}}{size} & \multirow{3}{*}{\parbox{1.5cm}{Lower bound on opt}} &\multicolumn{2}{c}{Saxena Algo.}  & \multirow{3}{*}{\parbox{1.5cm}{CPLEX Sol{ ($t_1$})}} & \multirow{3}{*}{\parbox{1.5cm}{CPLEX Sol{ ($2t_1$})}} &\multirow{3}{*}{\parbox{1.5cm}{Negative entries in D(\%)}} \\ \cmidrule[1.2pt](lr){2-3} \cmidrule[1.2pt](lr){5-6}
 	&	\multirow{2}{*}{m} &\multirow{2}{*}{n} &  & \multirow{2}{*}{\parbox{1.5cm}{cpu time $t_1$ (sec)}} & \multirow{2}{*}{Sol} &  &  &    	\\
	&	 & & & & 	 &  & 	&  \\ \midrule
Qscpcyc06& 192 & 240 &48& 156 & 241189  & 71 & 70 & 43.59 \\
Qscpcyc07 & 448 & 672 &112& 125 & 1486218  & 192 & 187 &46.00\\
Qscp41 & 1000 & 200 &432.27& 78 & 7743176  & 455 & 455 &44.48\\
Qscpe3 & 500 & 50 &3.307& 63 & 1777128 &  10 & 10 &46.46\\
Qscpe4 & 500 & 50 &3.455& 62 & 1638640  & 10 & 10 &45.69\\
Qscpe5 & 500 & 50 &3.393& 47 & 1918480  & 28 & 15 &45.69\\
Qgraph50-01 & 612 & 50 &111.75& 156 & 144  & 122 & 122 & 41.04 \\
Qgraph50-02 & 490 & 50 &93.6684& 172 & 641  & 124 & 124 & 42\\
Qgraph50-03 & 735 & 50 &101.25& 156 & 153 & 152 & 152 & 40.23\\
Qgraph50-04 & 612 & 50 &87.9275& 156 & 1057 &  149 & 149 &45.6\\
Qgraph50-05 & 490 & 50 &108.25& 156 & 127  & 124 & 124 &45.6\\
Qgraph50-06 & 857 & 50 &89.71486& 187 & 261 & 154 & 154 & 41.6 \\
Qgraph50-07 & 735 & 50 &85.1275& 156 & 1738  & 119 & 119 & 48\\
Qgraph50-08 & 612 & 50 &89.1389& 156 & 1030  & 100 & 100 & 44\\
Qgraph50-09 & 980 & 50 &89.6389& 156 & 600 &  146 & 146 & 47.52\\
Qgraph50-10 & 612 & 50 &92.5273& 156 & 2363  & 142 & 142 & 48.72\\
Qfrb30-15-1 & 17827 & 450 &810.306& 11185 & 5106  & 1494 & 1494 & 42.56 \\
Qfrb30-15-2 & 17874 & 450 &799.139& 7909 & 18906  & 1479 & 1479 & 44.59\\
Qfrb30-15-3 & 17809 & 450 &799.139& 6100 & 11497  & 1478 & 1477 & 44.59\\
Qfrb30-15-4 & 17831 & 450 &787.806& 5274 & 21930 &  1475 & 1475 &44.46\\
Qfrb30-15-5 &  17794 & 450 & 799.139 & 5616 & 14032  & 1478 & 1475 & 44.59  \\
 \bottomrule[1.5pt]
\end{tabular}
}
\label{table:1}
\end{table}

\begin{table}[h!]
\caption{Benchmark instances, $ \matr{D} $ is non-negative  and positive semi-definite}
\resizebox{\columnwidth}{!}{%
\begin{tabular}{@{}l ccccccc @{}} \toprule[1.5pt]
\multirow{3}[3]{*}{problem} & \multicolumn{2}{@{}c@{}}{size} & \multirow{3}{*}{\parbox{1.5cm}{Optimal sol}} &\multicolumn{2}{c}{Saxena Algo.}  & \multirow{3}{*}{\parbox{1.5cm}{CPLEX Sol{ ($t_1$})}} & \multirow{3}{*}{\parbox{1.5cm}{CPLEX Sol{ ($2t_1$})}}  \\ \cmidrule[1.2pt](lr){2-3} \cmidrule[1.2pt](lr){5-6}
 	&	\multirow{2}{*}{m} &\multirow{2}{*}{n} &  & \multirow{2}{*}{\parbox{1.5cm}{cpu time $t_1$ sec }} & \multirow{2}{*}{Sol} &  &     	\\
	&	 & & & & 	 &  & 	  \\ \midrule
Qscpcyc06& 192 & 240 & 147523 & 483   &  147523  & 147523  & 147523   \\
Qscpcyc07 & 448 & 672 & 726070   &  780 & 726070 & 726070  & 726070  \\
Qscp41 & 1000 & 200 & 7271   &  343 & 7271 & 7271  & 7271  \\
Qscpe3 & 500 & 50 & 7 &  390 & 3369  &  7 & 7  \\
Qscpe4 & 500 & 50 & 8 &  515 & 3726  &  8 & 8  \\
Qscpe5 & 500 & 50 & 7 & 359   & 1854  &  7 &  7 \\
Qgraph50-01 & 612 & 50 & 5149 & 187 &  5149 & 5149 & 5149  \\
Qgraph50-02 & 490 & 50 & 5108 & 156 & 5108  & 5108 & 5108 \\
Qgraph50-03 & 735 & 50 & 5163 & 188 & 5163 & 5163 & 5163 \\
Qgraph50-04 & 612 & 50 & 9272 & 156 & 9272 & 9272 & 9272 \\
Qgraph50-05 & 490 & 50 & 4168 & 125 & 4168  & 4168 & 4168 \\
Qgraph50-06 & 857 & 50 & 8042 & 124 & 8042 & 8042 & 8042  \\
Qgraph50-07 & 735 & 50 & 7088 & 124 & 7088  & 7088 & 7088 \\
Qgraph50-08 & 612 & 50 & 4670 & 109 & 4670 & 4670 & 4670 \\
Qgraph50-09 & 980 & 50 & 8872 & 141 & 8872 & 8872 & 8872\\
Qgraph50-10 & 612 & 50 & 6614 & 125 & 6614 & 6614 & 6614 \\
\bottomrule[1.5pt]
\end{tabular}
}
\label{table:2}
\end{table}

When  $\matr{D}$ is a random positive semi-definite matrix, Table \ref{table:1} shows that the Saxena-Arora algorithm does not return a good quality solution for QSP. Note that a general purpose solver like CPLEX  obtained much better  solutions within the same time limit for the test problems used. But when $ \matr{D} $ is non-negative  and positive semi-definite, Table \ref{table:2} shows that the Saxena-Arora algorithm produced solutions as good as those produced by CPLEX for many instances. For vertex cover  instances the Saxena-Arora algorithm produced an optimal solution. For the set cover instances CPLEX produces better solutions than the Saxena-Arora algorithm.  Thus, for $ \matr{D} $ is non-negative  and positive semi-definite, the Saxena-Arora algorithm  could be used as a heuristic to solve QSP. As our counter example indicates, even for this class the Saxena-Arora algorithm need not produce an optimal solution.


\begin{thebibliography}{1}

\bibitem{adam}  Adams W. P. and  Sherali H. D.: A tight linearization and an algorithm for zero-one quadratic programming problems, \textit{Management Science}, 32,  1274- 1290 (1986).

\bibitem{balas}  Balas E. and   Ho A.: Set covering algorithms using cutting planes, heuristics, and sub-gradient optimization: a computational study, \textit{Mathematical Programming }12, 37-60 (1980).

\bibitem{baz}    Bazaraa M. S. and   Goode J. J.:  A cutting-plane algorithm for the quadratic set-covering problem, \textit{Operations Research}, 23,  150 - 158 (1975).

\bibitem{beas} Beasley J.E.: A lagrangian heuristic for set-covering problems, \textit{Naval Research Logistics}, 37, 151-164 (1990).

\bibitem{bec}  Bector C. R. and  Bhatt S. K.: A linearization technique for solving integral
linear fractional program: \textit{Proc. fifth Manitoba Conference on Numerical Mathematics},  221 - 229 (1975).

\bibitem{2015ante}  Custic A.,  Punnen A. P.: A characterization of linearizable instances of the quadratic minimum spanning tree problem,  	\textit{arXiv:1510.02197} (2015).

\bibitem{bruno}  Escoffier B. and  Hammer P. L.:  Approximation of the Quadratic Set Covering problem: \textit{Discrete Optimization} 4,  378-386 (2007).

\bibitem{gar}  Garfinkel M. and  Nemhauser G. L.: \textit{Integer Programming}, A Wiley-Interscience
Publication, John Wiley and Sons (1973).

\bibitem{wool}  Grossman T. and  Wool A.: Computational experience with approximation algorithms for the set covering problem: \textit{European Journal of Operational Research}, 101, 81-92  (1997).

\bibitem{gupta}  Gupta R. and   Saxena R. R.:  Linearization technique for solving quadratic set packing and partitioning problems: \textit{International Journal of Mathematics and Computer Applications Research}, 4, 9 - 20 (2014).

\bibitem{gupta1}  Gupta R. and   Saxena R. R.:  Set packing problem with linear fractional objective function,  \textit{International Journal of Mathematics and Computer Applications Research}, 4,  9 - 18 (2014).

\bibitem{2011Kabadi}  Kabadi S. N. and  Punnen A. P.: An $O(n^4)$ algorithm for the QAP linearization problem, \textit{Mathematics of Operations Research} 36,  754-761 (2011).


\bibitem{lem}  Lemke C. E.,  Salkin H. M., and  Spielberg K.: Set covering by single
branch enumeration with linear programming sub-problem: \textit{Operations Research}, 19,  998 - 1022 (1971).

\bibitem{leo}  Liberti L., Compact linearization for binary quadratic problems: \textit{4OR}, 5, 231-245 (2007).



\bibitem{muth}  Periannan M.: An ant-based algotithm for the minimum vertex cover problem: Thesis, Master of Science, The Pennsylvania State University (2007).

\bibitem{2013Punnen}  Punnen A. P. and  Kabadi S. N.: A linear time algorithm for the Koopmans-Beckman QAP linearization and related problems, \textit{Discrete Optimization}, 10,  200-209 (2013).

\bibitem{saxe}  Saxena R. R. and  Arora S. R.: A linearization technique for solving the quadratic set covering problem, \textit{Optimization}, 39,  33-42 (1997).



\bibitem{kexu} K. Xu, http://www.nlsde.buaa.edu.cn/~kexu/benchmarks/graph-benchmarks.htm.



\end{thebibliography}
\end{document}